\def\BibTeX{{\rm B\kern-.05em{\sc i\kern-.025em b}\kern-.08em
        T\kern-.1667em\lower.7ex\hbox{E}\kern-.125emX}}
\DeclareMathOperator*{\vc}{vec}
\newcommand{\bx}{\mathbf{x}}    % transmit signal vector
\newcommand{\by}{\mathbf{y}}    % transmit signal vector
\newcommand{\bs}{\mathbf{s}}    % transmit signal vector
\newcommand{\bg}{\mathbf{g}}    % weight vector
\newcommand{\bn}{\mathbf{n}}    % noise vector
\newcommand{\bh}{\mathbf{h}}    % channel vector
\newcommand{\bC}{\mathbf{C}}    % coupling matrix
\newcommand{\bD}{\mathbf{D}}    % coupling matrix
\newcommand{\bI}{\mathbf{I}}    % Current vector or identity matrix
\newcommand{\bu}{\mathbf{u}}    % weight vector
\newcommand{\bX}{\mathbf{X}}    % reactance matrix
\newcommand{\bY}{\mathbf{Y}}    % reactance matrix
\newcommand{\bA}{\mathbf{A}}    % auxiliary matrix
\newcommand{\bB}{\mathbf{B}}    % auxiliary matrix
\newcommand{\bQ}{\mathbf{Q}}    % auxiliary matrix
\newcommand{\define}{\triangleq}    % Bold vector b
\newcommand{\mbbC}{\mathbb{C}}	% field of complex number
\newcommand{\mccn}{\mathcal{CN}}	% 	complex Gaussian
\newcommand{\bLam}{\mathbf{\Lambda}}
\newcommand{\bSig}{\mathbf{\Sigma}}
\newcommand{\btt}{\bsym{\theta}}
\newcommand{\bzro}{\mathbf{0}}
\newcommand{\ben}{\begin{enumerate}} 	  	% 	Begin Enumerate
	\newcommand{\een}{\end{enumerate}} 			% 	End Enumerate
\newcommand{\beq}{\begin{equation}} 	  	% 	Begin Equation
	\newcommand{\eeq}{\end{equation}} 			% 	End Equation
\newcommand{\bes}{\begin{equation*}}
	\newcommand{\ees}{\end{equation*}}
\newcommand{\bea}{\begin{eqnarray}}		% 	Begin Equation Array
	\newcommand{\eea}{\end{eqnarray}} 		% 	End Equation Array
\newcommand{\beas}{\begin{eqnarray*}}
	\newcommand{\eeas}{\end{eqnarray*}}
\newcommand{\ba}{\begin{array}}
	\newcommand{\ea}{\end{array}}
\newcommand{\sbea}{\nopagebreak[3]\samepage\begin{eqnarray}}
	\newcommand{\seea}{\end{eqnarray}\pagebreak[0]}
\newcommand{\sbeas}{\nopagebreak[3]\samepage\begin{eqnarray*}}
	\newcommand{\seeas}{\end{eqnarray*}\pagebreak[0]}
\newcommand{\er}[1]{{\rm(\ref{#1})}}
\newcommand{\bit}{\begin{itemize}}
	\newcommand{\eit}{\end{itemize}}
\newcommand{\bsym}{\boldsymbol}
\newcommand{\nn}{\nonumber}
\DeclareMathOperator{\Trace}{Tr}
\DeclareMathOperator{\diag}{diag}
\newtheorem{theorem}{Theorem}%[section]
\newtheorem{lemma}{Lemma}
\newtheorem{corollary}{Corollary}
\newenvironment{proof}[1][Proof]{\begin{trivlist}
		\item[\hskip \labelsep {\bfseries #1}]}{\end{trivlist}}
\def\iflatex{\iftrue}
\def\ifcomments{\iffalse}
\begin{document}
\title{Impedance Variation Detection at MISO Receivers}
\author{\IEEEauthorblockN{Shaohan Wu}\\
  \IEEEauthorblockA{
    {MediaTek USA Inc.}, Irvine, CA 92620 \\
    {Shaohan.Wu}@mediatek.com}
}
  \maketitle
  \begin{abstract}
  Techniques have been proposed to  estimate unknown antenna impedance due to time-varying near-field loading conditions at multiple-input single-output (MISO) receivers. However, it remains unclear when a change occurs and impedance estimation becomes necessary.  In this letter, we address this problem by formulating it as a hypothesis test. Our contributions include deriving a generalized likelihood-ratio test (GLRT) detector to decide if the antenna impedance has changed  over two groups of packets. This GLRT formulation leads to a novel optimization problem, but we propose a binary search based algorithm to solve it efficiently. Our derived GLRT detector enjoys a better detection and false alarm trade-off when compared with a well-known, reference detector in simulations.  As one result, more transmit diversity significantly improves detection accuracy at a given false alarm rate, especially in slow fading channels.
  \end{abstract}
  
  \begin{IEEEkeywords}
    Change Detection, Generalized Likelihood-Ratio Test, Maximum-Likelihood Estimator, MISO Receiver.
  \end{IEEEkeywords}

\section{Introduction}
Impedance matching between the receive antenna and front-end can significantly impact channel capacity in wireless, multi-path channels \cite{domi2}. To implement capacity-optimal matching, receivers must know the antenna impedance. Due to time-varying near-field loading conditions, antenna impedance may change significantly.  Researchers have proposed techniques to estimate the unknown antenna impedance in real-time \cite{wu,wu2,hass,wu_mimo_hybrid,wu_pca_mimo,wu_pca_miso}. 
However, it remains unclear when these estimation algorithms should be triggered. In this letter, we propose a mechanism to detect impedance change. 

Hassan and Wittneben considered joint channel and impedance
estimation for multiple-input multiple-output (MIMO) receivers. In their work, they vary the receiver load impedance and apply a least-square estimation technique to jointly estimate the channel and the antenna impedance under uncorrelated, fast-fading channels \cite{hass}. Wu and Hughes  considered impedance estimation at MISO receivers using a hybrid estimation framework, where channel and impedance are estimated jointly \cite{wu2}. Wu extended it to MIMO \cite{wu_mimo_hybrid}. The drawback of this approach is the impedance estimator is generally biased. Consequently, Wu and Hughes derived a closed-form maximum-likelihood (ML) estimator for impedance, treating the channel gains as nuisance parameters, for both MISO and MIMO receivers under i.i.d. Rayleigh fading \cite{wu_pca_miso,wu_pca_mimo}. They also demonstrated that these ML estimators are efficient with sufficient signal-to-noise-ratio (SNR) and diversity. 
Since extra circuitry operations are needed to apply these estimation algorithms, ideally we want to apply them only when impedance has changed for battery-life sensitive devices like mobile phones and smart watches. However, such impedance variation detection algorithms remain unknown.

In this letter, we address this problem by formulating it as a hypothesis test. Our contributions include deriving a generalized likelihood-ratio test (GLRT) detector to decide if the antenna impedance has changed  over two groups of packets. Our derived GLRT detector is generally not in a closed-form due to non-linear nature of the likelihood function. This formulation leads to a novel optimization problem. But we propose a binary-search based algorithm, which solves said problem exactly and efficiently. Comparison between our proposed GLRT detector to a well-known, reference detector suggests our GLRT detector generally enjoys a sizable benefit in detection and false alarm trade-off. Importantly, our proposed detector is able to prolong battery life for smart wearable devices with adaptive impedance matching capabilities. 

In this letter,  we use $Z_A$ and $Z_L$ to represent the antenna and load impedance, respectively. Bold lower case notations are vectors, while bold upper case ones are matrices. Note $\bh$ is the channel coefficients needed by communication algorithms and $\sigma_{h}^2$ is its variance. Also,  $(\cdot)^T$ is transpose and $(\cdot)^H$ is the Hermitian operator. 

The rest of this paper is organized as follows. We present the system model and formulate this problem using hypothesis testing in Sec.~\ref{sec_model}. Then in  Sec.~\ref{sec_det} we derive maximum likelihood (ML) estimators and consequent the GLRT detector. The performance of this detector is  explored via numerical examples in Sec.~\ref{sec_sim}. We conclude and point out a future direction in Sec.~\ref{sec_conclusion}.

\section{Problem Formulation}\label{sec_model}\label{sec_test}
Suppose there is a single antenna at the receiver. 
The transmitter sends a known training sequence, $\bx_1 , \ldots , \bx_T\in\mbbC^N$. If channel gains $\bh$ are fixed for the duration of training, the received observations can be compactly written by \cite[eq.~1]{vu}
\beq\label{eq_u_vec}
\bu  \ = \   \bh^T\bX + \bn_L \ , 
\eeq
where we define $\bX \define [ \bx_1,\cdots,\bx_T ]\in\mbbC^{N\times T}$ and $\bn_{L} \sim\mathcal{CN}(0, \sigma_L^2\bI_T)$ is independent and identically distributed.

We assume multiple transmit antennas exist at the base station, i.e., $N>1$, and they are sufficiently separated such that $\bh \sim\mccn\left(\bzro_N,\sigma_h^2\bI_N\right)$.
We assume block-fading, and consequently, for each data packet, a training sequence precedes the data sequence \cite{bigu}. 
We assume $\bX$ is orthogonal and equal-energy, i.e.,  
\beq\label{eq_X}
\bX\bX^H \ = \ \frac{PT}{N}\bI_N \ .
\eeq
It can be shown that  a sufficient statistic exists which summarizes information in $\bu$ \er{eq_u_vec} and we denote it as \cite{kay_est}
\beq\label{eq_y}
\by \ \define \  \frac{N}{PT}\left(\bu\bX^H\right)^T \ = \ \bh + \bn \ ,
\eeq
where $\by$ and $\bn$ are in $\mbbC^N$, $\bn\sim\mccn(\bzro,\sigma_n^2\bI_N)$, and 
\beq\label{eq_sigma_n}
\sigma_n^2 \ = \ \frac{N \sigma_L^2}{T P} \ .
\eeq
 
The goal of this letter is to derive detectors for antenna impedance variations over time. We observe any such variation manifests itself as a change to the variance $\sigma_h^2$. Therefore, the goal can be equivalently stated as detecting changes to the variance of the fading channel over time.

%In Sec. \ref{sec_test}, we formulate this problem as variance equality testing. 

%\subsection{Problem Formulation}

To state this problem clearly, we reiterate the goal is to detect changes in antenna impedance due to time-varying loading conditions in the near-field. We assume this loading condition varies at a rate much slower than the duration of a packet\footnote{For example, movements of human users serve a common cause to time-varying near-field loading conditions. Its rate can be modeled on the order of seconds. However, a packet is often on the order of a millisecond. }.  To this end, we assume the antenna impedance remains the same in each group of packets, and may vary from group to group. This serves a reasonable approximation of the practical phenomena that we want to model, and leads to tractable solutions shown next. 
 
Consider two groups of observation each with $L$ consecutive packets. In general, temporal correlation exists between these packets. Also assume the two groups are sufficiently apart in time, such that any packet in either group is statistically independent to any packet in the other. From \er{eq_y}, we have, 
\beq\label{eq_Ys}
\bY_i \ = \ \begin{bmatrix}
    \by_{i, 1} & \cdots & \by_{i, L}
\end{bmatrix} \ ,
\eeq
with  $\by_{i,k} = \bh_{i,k} + \bn_{i,k}$, where $i=1,2$ is group index and $k=1,\dots, L$  packet index. Note  $\bh_{i,k}$ depends on the unknown antenna impedance $Z_{A,i}$ via a voltage divider \cite[eq.~4]{wu},  
\beq\label{eq_h}
\bh_{i,k} \ \define \ \frac{Z_L}{Z_{A,i} + Z_L}\bg_{i,k} \ , 
\eeq
where $\bg_{i,k}$ are path gains of the physical propagation channel and $Z_L$ is the load impedance\cite[Fig.~1]{wu2}. 

The channel and noise vectors are zero-mean complex Gaussian, and with \er{eq_sigma_n} and \er{eq_h}, we can write for all $k$, 
\beq\label{eq_cov}
\by_{i,k} \sim\mccn\left(\bzro_N, \left(\sigma_{h,i}^2+\sigma_n^2\right)\bI_N\right)  
\ ,
\eeq
where $i=1,2$, unknown variances of path gains $\bg_{i,k}$ are denoted as $\sigma_{g,i}^2$, and we have from \er{eq_h}, 
\beq\label{eq_sgmh}
\sigma_{h,i}^2 \ = \ \left|\frac{Z_L}{Z_{A,i}+Z_L}\right|^2 \sigma_{g,i}^2\ . 
\eeq
The spatial independence of $\bg_{i,k}$  comes from uncoupled transmit antennas. 

We formulate the problem of antenna impedance variation detection as a hypothesis test,
\bea\label{eq_hyp_test}
\begin{cases}
    H_0:& \sigma_{h,1}^2 \ = \ \sigma_{h,2}^2 \ ,\\
    H_1:& \sigma_{h,1}^2 \ \neq \ \sigma_{h,2}^2 \ ,
\end{cases}
\eea
with the assumption that inequality in the channel variance is solely caused by inequality in antenna impedance \eqref{eq_sgmh}. 

If the null hypothesis $H_0$ is true, then the antenna impedance remains identical in the two groups considered. Otherwise, a change in antenna impedance has occurred and antenna impedance estimation becomes necessary. 

In the next section, we present the main results of this letter, i.e., the generalized likelihood-ratio test (GLRT) detector for the problem formulated in \er{eq_hyp_test}. 

\section{Main Results}\label{sec_det}

We consider the observations $\bY_1$ and $\bY_2$ in \er{eq_Ys}. Firstly, they are statistically independent. Also, the distribution of $\bY_i$ with $i=1,2$  is, 
\beq\label{eq_Yi_dist}
\vc\left(\bY_i\right) \sim\mccn \left(\bzro_{NL}, \sigma_{h,i}^2\bC\otimes\bI_N+\sigma_n^2\bI_{NL}\right) \ ,
\eeq
where $\bC$ is the known channel temporal correlation matrix but channel variance $\sigma_{h,i}^2$'s are unknown. We define
\beq\label{eq_theta_vec}
\btt \ \define \ \begin{bmatrix}
    \theta_1 & \theta_2
\end{bmatrix}^T\ = \ \begin{bmatrix}
    \sigma_{h,1}^2 & \sigma_{h,2}^2
\end{bmatrix}^T \ . 
\eeq

To test \er{eq_hyp_test}, we need the following sufficient statistic, 
\beq\label{eq_s_def}
\bs \ \define \ \begin{bmatrix}
    \bs_1 & \bs_2
\end{bmatrix}^T\ ,
\eeq
where 
for $i=1,2,$
\beq\label{eq_ss}
\bs_i \ \define \ \diag\left(\frac{1}{N}\bQ^H\bY_i^T\bY_i^*\bQ\right) \ , 
\eeq
where $\bQ$ is from the eigen-decomposition of $\bC$,  
\beq\label{eq_Sigma_def}
\bC \ = \ \bQ\bLam\bQ^H \ ,~~\bLam \ \define \ \diag(\lambda_1,\dots,\lambda_L) \ . 
\eeq
Note $\bs_i$ in \er{eq_ss} is the sample covariance after decorrelation. 

Note \er{eq_hyp_test} can be written in terms of $\btt$ defined in \er{eq_theta_vec}. 
When the null hypothesis is true, we estimate $\theta_1$ using $\bs$. But under the alternative hypothesis, we estimate $\theta_1$ and $\theta_2$ using $\bs_1$ and $\bs_2$, respectively. 

By definition, the generalized likelihood-ratio test (GLRT) detector plugs in the maximum-likelihood (ML) estimator back into its likelihood function \cite[Sec.~6.4.2]{kay}. 
Thus, we present the ML estimator of the unknown channel variance first, and then derive GLRT detectors. 

\subsection{ML Estimators}

Without loss of generality, we firstly discuss how to estimate the unknown channel variance of either group specified in \er{eq_Ys}. 
With a slight abuse of notation, we label the unknown parameter that we want to estimate as 
\beq\label{eq_theta}
\theta \ \define \ \sigma_h^2 \ , 
\eeq
and consider a statistic $\bs$ that is either $\bs_i$ defined in \er{eq_ss}. 

As shown in \er{eq_pdf_s} in the Appendix, we can write the log-likelihood function (LLF) in terms of $\theta$, 
\bea\label{eq_llf_theta}
&&\mathcal{L}(\theta;\bs) \ \define \ \ln p(\bs;\theta) \nn\\
& = & c- N\sum_{k=1}^{L}\left[\ln\left(\lambda_k\theta+\sigma_n^2\right)+\frac{s_k}{\lambda_k\theta+\sigma_n^2}\right] \ ,
\eea
where $s_k$ is the $k$-th element of $\bs$, both $\lambda_k$ and $\sigma_n^2$ are known, and $c$ is a constant independent of $\theta$. 

We observe several important differences between our derived LLF in \er{eq_llf_theta} and a representative reference \cite[eq.~9.6]{kay}. Note $\theta$ only exists in our formulation and the $\lambda_k$'s make our problem more challenging to solve. Thus, our formulation in \er{eq_llf_theta} is fundamentally different from \cite[eq.~9.6]{kay}. Further, we propose an efficient algorithm to estimate $\theta$ in Algorithm \ref{algo_line_srch}. 

By definition, the ML estimator (MLE) is the maximizer of the LLF, i.e.,
\beq\label{eq_mle}
\hat{\theta}_{ML} \ \define \ \arg\max_\theta \mathcal{L}(\theta) \ .
\eeq

The gradient of the LLF \er{eq_llf_theta} is readily derived, i.e., 
\bea\label{eq_grad_hess}
\nabla\mathcal{L}(\theta) & = & \sum_{k=1}^{L}  \left[\frac{s_k\lambda_k}{\left(\lambda_k\theta+\sigma_n^2\right)^2} -  \frac{\lambda_k}{\lambda_k\theta+\sigma_n^2} \right] \ .
%\ , \\
%%\mathcal{H}\left(\theta\right)  & \define &  
%\frac{\partial^2\mathcal{L}(\theta)}{\partial \theta^2} &=& \sum_{k=1}^{L} \left[\frac{\lambda_k^2}{\left(\lambda_k\theta+\sigma_n^2\right)^2} -  \frac{2s_k\lambda_k^2}{\left(\lambda_k\theta+\sigma_n^2\right)^3}\right]   \ . \nn
\eea

A necessary condition for $\hat{\theta}_{ML} $ is the gradient vanishes. 
However from \er{eq_grad_hess}, a closed-form expression generally does not exist, due to summation of non-linear functions of $\theta$.  
But first, we consider a special case where the MLE is in closed-form. 

\subsubsection{ML Estimator under i.i.d. Fading}
In this special case, we have, $\lambda_k = 1$ for all $k$,. 
The MLE of $\theta$ exists in closed-form based on \er{eq_grad_hess}, i.e.,
\beq\label{eq_mle_iid}
\hat{\theta}_{ML,iid} \ = \ \max 
\left\{\frac{\sum_{k=1}^{L}s_k}{L} - \sigma_{n}^2, 0 \right\} \ . 
\eeq
Note the $\max\left\{\cdot\right\}$ operator is necessary as $\theta$ is variance and hence non-negative by definition. 

\subsubsection{ML Estimator in general}

Since $\theta\in[0,\infty)$, we could set the upper bound a practically large number. Then MLE falling outside the practical range is negligible. We could use a simple line search algorithms with a fixed step size. Thus, we are guaranteed to find the MLE, but a disadvantage is its computational inefficiency. 

The gradient in \er{eq_grad_hess} is positive when $\theta =0$  and  crosses zero only once and then remain negative. This renders the procedure of finding the MLE simple. We propose a binary-search based algorithm to solve the MLE efficiently. To be specific, we start with a range from zero to a sufficiently large number and use binary search to iteratively narrow down this range until it is in close proximity of the MLE. In each iteration, we replace one of the two bounds of this range by its middle point if the gradient of this bound and the middle point share same sign. 

\begin{algorithm}
    \caption{Binary-Search Based Algorithm}
    \label{algo_line_srch}
    \begin{algorithmic}[1]
%        \Procedure{CH\textendash Election}{}
        \State Let $\theta_l=0$, $\theta_u = \alpha\sum_{k=1}^{L}s_k$, $\epsilon=10^{-3}$
        \While{$\nabla\mathcal{L}(\theta_l)\cdot\nabla\mathcal{L}(\theta_u)<0$}
        \State $\theta_m={(\theta_l+\theta_u)}/{2}$
        \If {$\left|\nabla\mathcal{L}(\theta_m)\right|<\epsilon$} break
        \EndIf
        \If {$\nabla\mathcal{L}(\theta_l)\cdot\nabla\mathcal{L}(\theta_m)>0$} $\theta_l=\theta_m$
        \Else \ $\theta_u=\theta_m$
        \EndIf
        \EndWhile
        \State $\hat{\theta}_{ML}=\theta_m$
    \end{algorithmic}
\end{algorithm}
Note $\alpha=10$ (by default) and $\epsilon=10^{-3}$ are parameters of Algorithm \ref{algo_line_srch} to choose the trade-off between accuracy and speed, i.e., number of iterations needed. 
%Next, we present results on GLRT detectors based on the ML estimators already derived. 

\subsection{GLRT Detectors}
Classical detection theory offers many important results on hypothesis testing \cite{vt,kay}. 
In general, we can write down the likelihood ratio test \cite[eq.~2.195]{vt},
\bea\label{eq_test}
t\left(\bs\right) \ = \ \mathcal{L}\left(\bs| H_1\right) - \mathcal{L}\left(\bs|H_0\right)  \ \mathop{\gtreqless}^{H_1}_{H_0} \ \gamma \ , 
\eea
where $\mathcal{L}(\bs|H_i)$ denotes the log-likelihood function (LLF) in terms of sufficient statistic $\bs$ in \er{eq_s_def} conditioned on a hypothesis $H_i$ with $i=0,1$ as defined in \er{eq_hyp_test}, and $\gamma$ denotes the threshold for this test that we get to choose. 

\begin{lemma}[LLF]\label{lem_llf}
    Given the observations $\bY_1$ and $\bY_2$ in \er{eq_Ys},  it can be shown that its log-likelihood function (LLF) can be equivalently written in terms of $\bs$, i.e.,
    \bea\label{eq_llf}
    &&\mathcal{L}(\btt) \ \define \ \ln p(\bY_1, \bY_2;\btt)\\
    &=& c- N\sum_{i=1}^{2}\sum_{k=1}^{L}\left[\ln\left(\lambda_k\theta_i+\sigma_n^2\right)+\frac{s_{i,k}}{\lambda_k\theta_i+\sigma_n^2}\right] \ ,\nn
    \eea
    where $s_{i,k}$ is the $k$-th element of $\bs_i$ in \er{eq_ss} and $\lambda_k$ in \er{eq_Sigma_def}. 
\end{lemma}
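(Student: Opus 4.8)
The plan is to exploit the statistical independence of $\bY_1$ and $\bY_2$ together with the eigenstructure of $\bC$, so that the $NL$-dimensional complex Gaussian density decouples into $L$ scalar modes per group. First I would invoke independence to factor $p(\bY_1,\bY_2;\btt) = p(\bY_1;\theta_1)\,p(\bY_2;\theta_2)$, giving $\mathcal{L}(\btt) = \sum_{i=1}^{2}\ln p(\bY_i;\theta_i)$, and write each factor from the zero-mean, circularly-symmetric complex Gaussian density in \er{eq_Yi_dist},
\beq
\ln p(\bY_i;\theta_i) \ = \ -NL\ln\pi - \ln\left|\bSigma_i\right| - \vect{\bY_i}^H \bSigma_i^{-1}\vect{\bY_i} \ ,
\eeq
where $\bSigma_i \define \theta_i\bC\otimes\bI_N + \sigma_n^2\bI_{NL}$. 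The whole task then reduces to simplifying the log-determinant and the quadratic form.

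Second, I would substitute the eigendecomposition \er{eq_Sigma_def} and use the mixed-product property of the Kronecker product to write $\bSigma_i = (\bQ\otimes\bI_N)\bD_i(\bQ\otimes\bI_N)^H$, where $\bD_i \define \theta_i(\bLam\otimes\bI_N) + \sigma_n^2\bI_{NL}$ is diagonal with the value $\lambda_k\theta_i + \sigma_n^2$ repeated $N$ times for each $k$. Since $\bQ$ is unitary, $\bQ\otimes\bI_N$ is unitary, so $\left|\bSigma_i\right| = \left|\bD_i\right| = \prod_{k=1}^{L}(\lambda_k\theta_i+\sigma_n^2)^N$ and hence $\ln\left|\bSigma_i\right| = N\sum_{k=1}^{L}\ln(\lambda_k\theta_i+\sigma_n^2)$, which already supplies the logarithmic term in \er{eq_llf}.

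Third — and this is the step I expect to carry the real work — I would reduce the quadratic form to the sufficient statistic. Writing $\bSigma_i^{-1} = (\bQ\otimes\bI_N)\bD_i^{-1}(\bQ^H\otimes\bI_N)$ and applying the identity $\vc(\bA\bX\bB) = (\bB^T\otimes\bA)\vc(\bX)$ with $\bA = \bI_N$ and $\bB = \bQ^*$ gives $(\bQ^H\otimes\bI_N)\vect{\bY_i} = \vect{\bY_i\bQ^*}$, so the quadratic form becomes $\vect{\bY_i\bQ^*}^H\bD_i^{-1}\vect{\bY_i\bQ^*}$. Because $\bD_i^{-1}$ is diagonal and constant on each block of $N$ coordinates, and column-stacking $\vc$ aligns the $k$-th block with the $k$-th column of $\bY_i\bQ^*$, this collapses to $\sum_{k=1}^{L}(\lambda_k\theta_i+\sigma_n^2)^{-1}\left[\bQ^T\bY_i^H\bY_i\bQ^*\right]_{k,k}$. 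The delicate point is matching this to the definition \er{eq_ss}, which uses $\bY_i^T\bY_i^*$ rather than $\bY_i^H\bY_i$: I would observe that $\left[\bQ^T\bY_i^H\bY_i\bQ^*\right]_{k,k}$ is real, being a diagonal entry of the positive-semidefinite matrix $(\bY_i\bQ^*)^H(\bY_i\bQ^*)$, and that conjugating it entrywise yields exactly $\left[\bQ^H\bY_i^T\bY_i^*\bQ\right]_{k,k} = N s_{i,k}$, so the two coincide.

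Finally, I would collect the three pieces, sum over $i=1,2$, and absorb $-2NL\ln\pi$ into the constant $c$, recovering \er{eq_llf} verbatim. The only genuine obstacle is the bookkeeping of conjugates and transposes in the third step; everything else follows directly from the unitarity of $\bQ\otimes\bI_N$ and the diagonal structure induced by the eigendecomposition.
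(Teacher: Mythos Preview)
Your proposal is correct and follows essentially the same route as the paper: the paper omits the proof of Lemma~\ref{lem_llf} and points back to the single-group derivation \er{eq_llf_theta} and the Appendix, where the quadratic form in the Gaussian density is rewritten via the trace--vec identity $\Trace[\bA\bB\bC\bD]=\vc(\bB)^T(\bC\otimes\bA^T)\vc(\bD^T)$ and then diagonalized through $\bC=\bQ\bLam\bQ^H$. You do the same diagonalization but arrive at the scalar sum by the dual identity $\vc(\bA\bX\bB)=(\bB^T\otimes\bA)\vc(\bX)$ and column norms of $\bY_i\bQ^*$; your observation that $[\bQ^T\bY_i^H\bY_i\bQ^*]_{k,k}$ is real and hence equals $[\bQ^H\bY_i^T\bY_i^*\bQ]_{k,k}=N s_{i,k}$ is exactly the conjugation step implicit in the paper's passage from \er{eq_pdf_y} to \er{eq_pdf_s}.
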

The proof of Lemma \ref{lem_llf} is straightforward based on derivations  in finding \er{eq_llf_theta}, and thus omitted here due to page limitation. 

%\subsection{Generalized Likelihood Ratio Test}
Next we derive the generalized likelihood ratio test (GLRT) for the problem \er{eq_hyp_test} using ML estimates to replace unknown parameters in \er{eq_test}. 
The GLRT detector is stated in a theorem. 

\begin{theorem}[GLRT Detector]\label{th_glrt}
  Suppose the original observation $\bY_1$ and $\bY_2$ and the sufficient statistic are given in \er{eq_Ys} and \er{eq_ss}, respectively. Also suppose the channel variances as defined in \er{eq_theta_vec} are unknown and to be estimated. The generalized likelihood ratio test (GLRT) for the problem in \eqref{eq_hyp_test} is \cite[eq.~6.12]{kay},
  \bea\label{eq_glrt}
  t_{GLRT} &\define & \max_{\btt}  \mathcal{L}(\btt|H_1) - \max_{\btt} \mathcal{L}(\btt|H_0) \\
  &=& N\sum_{k=1}^{L}\left[2\ln\left(\lambda_k\tilde{\theta}_1+\sigma_n^2\right)+\sum_{i=1}^{2}\frac{s_{i,k}}{\lambda_k\tilde{\theta}_1+\sigma_n^2}\right] \nn\\
  &&-N\sum_{i=1}^{2}\sum_{k=1}^{L}\left[\ln\left(\lambda_k\hat{\theta}_i+\sigma_n^2\right)+\frac{s_{i,k}}{\lambda_k\hat{\theta}_i+\sigma_n^2}\right] \nn\ ,
  \eea
  where $\hat{\theta}_i$ for $i=1,2$ is the MLE \er{eq_mle} when $H_1$ is true \er{eq_hyp_test}, i.e, $\theta_1\neq\theta_2$, while $\tilde{\theta}_1$ is that when $H_0$ is true, i.e., $\theta_1=\theta_2$, $s_{i,k}$ is the $k$-th element of $\bs_i$ defined in \er{eq_ss} and $\lambda_k$ in \er{eq_Sigma_def}. 
\end{theorem}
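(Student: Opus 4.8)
The plan is to begin from the definition in the first line of \er{eq_glrt}, $t_{GLRT} = \max_{\btt}\mathcal{L}(\btt|H_1) - \max_{\btt}\mathcal{L}(\btt|H_0)$, and to substitute the explicit LLF supplied by Lemma~\ref{lem_llf}. The crucial structural fact I would establish first is that the double sum in \er{eq_llf} splits additively into an $i=1$ block depending on $\theta_1$ only and an $i=2$ block depending on $\theta_2$ only, each block being exactly the single-parameter LLF \er{eq_llf_theta} with statistic $\bs_i$. This separability is the engine of the whole argument and reduces the two-dimensional maximizations to the one-dimensional ones already analyzed.

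Next I would evaluate the two maxima. Under $H_1$ the vector $\btt$ is unconstrained, so by separability $\max_{\btt}\mathcal{L}(\btt|H_1)$ decouples into independent maximizations over $\theta_1$ and $\theta_2$; each is attained at the corresponding per-group MLE $\hat{\theta}_i$ of \er{eq_mle}, characterized by the vanishing-gradient condition \er{eq_grad_hess} and computed by Algorithm~\ref{algo_line_srch}. Substituting $\hat{\theta}_1,\hat{\theta}_2$ back produces the constant $c$ minus the second double sum appearing in \er{eq_glrt}. Under $H_0$ I would impose the constraint $\theta_1=\theta_2\define\theta$ carried by the conditioning; the two blocks then collapse onto a common argument, merging the logarithmic terms into $2\ln(\lambda_k\theta+\sigma_n^2)$ and the data terms into $(s_{1,k}+s_{2,k})/(\lambda_k\theta+\sigma_n^2)$. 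This is again a single-parameter problem of the form \er{eq_llf_theta}, now driven by the pooled statistic $s_{1,k}+s_{2,k}$, so the same gradient argument and Algorithm~\ref{algo_line_srch} deliver the constrained maximizer $\tilde{\theta}_1$ and hence $\max_{\btt}\mathcal{L}(\btt|H_0)$.

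Finally I would subtract the two maxima: the common constant $c$ cancels, and rewriting $s_{1,k}+s_{2,k}=\sum_{i=1}^{2}s_{i,k}$ in the $H_0$ term reproduces \er{eq_glrt} verbatim. The routine computations are light, since the heavy lifting for the MLEs is already delegated to Algorithm~\ref{algo_line_srch} and to the monotonicity of \er{eq_grad_hess} (positive at $\theta=0$, a single zero crossing) that guarantees each maximizer exists and is unique. The one point needing care, which I expect to be the main obstacle, is the precise meaning of $\max_{\btt}\mathcal{L}(\btt|H_1)$: I would adopt the standard GLRT convention of maximizing over the full (closed) parameter set rather than the open set $\{\theta_1\neq\theta_2\}$ of \er{eq_hyp_test}, noting that excluding the measure-zero boundary $\{\theta_1=\theta_2\}$ leaves the supremum unchanged, so that the decoupled maximization is legitimate.
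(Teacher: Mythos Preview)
Your proposal is correct and follows essentially the same approach as the paper: the paper's proof is a one-line remark that the result follows by plugging the respective ML estimators under each hypothesis into the LLF of Lemma~\ref{lem_llf}, and your plan is simply a more detailed unpacking of that substitution, including the separability observation and the cancellation of the constant $c$.
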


\begin{proof}
  The proof is straightforward by plugging the respective ML estimators \er{eq_mle} conditioned on either hypothesis in \er{eq_hyp_test} into their corresponding LLF in \er{eq_llf}. 
\end{proof}

In the next result, we consider a special case where the fading channels are i.i.d..

\begin{corollary}[GLRT Detector in i.i.d. Fading]
  If fading channel is i.i.d. in time, i.e., $\bC=\bI_L$ and equivalently $\lambda_k=1$ for all $k=1,\dots, L$, then the well-known Bartlett's test is the GLRT detector  \cite{ciuo}, i.e., 
  \beq\label{eq_bt}
  t_{BT} \ = \ \ln \left[\frac{(s_{1}+s_{2})^2}{4s_{1}s_{2}} \right]\ , 
  \eeq
  where for $i=1,2$, we have defined
  \beq
  s_i \ \define \ \frac{1}{L}\sum_{k=1}^{L}s_{i,k} \ ,
  \eeq
  and $s_{i,k}$ is the $k$th entry in $\bs_i$ in $\bs$ as defined in \er{eq_s_def}. 
\end{corollary}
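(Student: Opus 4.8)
The plan is to specialize the general GLRT expression in \eqref{eq_glrt} to the i.i.d.\ case $\lambda_k=1$ for all $k$, where the MLEs become available in closed form via \eqref{eq_mle_iid}, and then show that the resulting test statistic reduces (up to irrelevant additive/multiplicative constants absorbed into the threshold) to Bartlett's statistic \eqref{eq_bt}. First I would substitute $\lambda_k=1$ into the two MLEs appearing in \eqref{eq_glrt}. Under $H_1$ the unconstrained maximizers factor across the two groups, so by \eqref{eq_mle_iid} we get $\hat\theta_i = \max\{s_i - \sigma_n^2,\,0\}$ for $i=1,2$, where $s_i = \tfrac{1}{L}\sum_k s_{i,k}$ is the per-group average defined in the corollary. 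Under $H_0$ the single shared parameter is estimated from the pooled statistic, giving $\tilde\theta_1 = \max\{\tfrac{1}{2}(s_1+s_2) - \sigma_n^2,\,0\}$.

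Next I would evaluate each log-likelihood at its maximizer. With $\lambda_k=1$, every summand over $k$ loses its $k$-dependence except through $s_{i,k}$, and since $s_{i,k}$ enters \eqref{eq_glrt} only through the sum $\sum_k s_{i,k} = L\,s_i$, the two maximized log-likelihoods collapse to expressions in $s_1$, $s_2$, and $\sigma_n^2$ alone. Concretely, in the generic (interior) regime where all the $\max\{\cdot\}$ operators are active, the terms $\lambda_k\hat\theta_i + \sigma_n^2 = s_i$ and $\lambda_k\tilde\theta_1 + \sigma_n^2 = \tfrac{1}{2}(s_1+s_2)$ are constant in $k$. The fractional terms $\tfrac{s_{i,k}}{\lambda_k\theta+\sigma_n^2}$ then sum to $\tfrac{L s_i}{\theta+\sigma_n^2}$, and at each maximizer these fractional contributions cancel against one another between $H_0$ and $H_1$ (both evaluate to $L$ per group after substitution), leaving only the logarithmic terms. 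Collecting them yields
\beq
t_{GLRT} \ = \ NL\left[2\ln\left(\tfrac{s_1+s_2}{2}\right) - \ln s_1 - \ln s_2\right] \ ,
\eeq
which is exactly $NL$ times $t_{BT} = \ln\!\big[(s_1+s_2)^2/(4 s_1 s_2)\big]$. Since the positive constant $NL$ merely rescales the threshold $\gamma$ in \eqref{eq_test}, the two tests are equivalent, establishing the claim.

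The main obstacle I anticipate is the careful bookkeeping of the cancellation in the fractional terms and, more subtly, the handling of the $\max\{\cdot,0\}$ clipping. I would need to verify that the stated Bartlett form holds on the effective domain where $s_i > \sigma_n^2$ (so that the MLEs are interior and the clean cancellation goes through); in the boundary regime where one or more estimates clip to zero, I would argue either that this event is negligible at the SNR/diversity levels of interest, or that the test degenerates consistently with $t_{BT}$. Verifying that the constant $c$ in \eqref{eq_llf} cancels in the difference \eqref{eq_glrt} is immediate, since it is common to both maximized likelihoods, so no additional care is needed there.
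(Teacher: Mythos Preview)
Your proposal is correct and follows essentially the same route as the paper: specialize \eqref{eq_glrt} to $\lambda_k=1$, plug in the closed-form i.i.d.\ MLEs from \eqref{eq_mle_iid} under each hypothesis, observe that the fractional terms each contribute a constant that cancels in the difference, and collect the logarithmic terms to obtain $NL\cdot t_{BT}$, with the factor $NL$ absorbed into the threshold. Your treatment is in fact a bit more careful than the paper's, which silently assumes the interior regime and does not discuss the $\max\{\cdot,0\}$ clipping at all.
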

 
 \begin{proof}
   This proof is straightforward based on the i.i.d. MLE derived in \er{eq_mle_iid}. Under the null hypothesis, we have 
   \beq
   \max_{\btt}\mathcal{L}(\btt|H_0) \ = \ -2NL\left[\ln\left(\frac{s_1+s_2}{2}\right)+1\right] \ ,
   \eeq
   and under the alternative hypothesis, 
   \beq
   \max_{\btt}\mathcal{L}(\btt|H_1) \ = \  -NL\left[\ln\left(s_1\right)+1 +  \ln\left(s_2\right)+1\right]\ . 
   \eeq
   Plugging above two formulas into \er{eq_glrt}  and some equivalent simplification result in \er{eq_bt}. 
%    \hfill $\diamond$
 \end{proof}
 
 The well-known Bartlett's test in \er{eq_bt} is the GLRT detector when fading channel is i.i.d; otherwise it is sub-optimal and serves as a reference for the GLRT detector in \er{eq_glrt}. 
 
 We compare complexity of both derived detectors in \er{eq_glrt} and \er{eq_bt} against the best single-antenna impedance estimator, i.e., the ML estimation in \cite[eq. 30]{wu_pca_miso}. The results of complexity averaged over each packet are summarized in Table \ref{tab_complexity}. 
 
 \begin{table}[h!]
     \vspace{-4pt}
   \begin{center}
       \caption{Algorithm Complexity Comparison}
%   \vspace{-4pt}
   \begin{tabular}{|c | c | c | c | c|} 
     \hline
     Algorithm & $+$ & $\times$ & $\div$ & Circuitry Op. \\  
     \hline
     GLRT \er{eq_glrt} & 4.5$N_{itr}$ & 3$N_{itr}$ & 3$N_{itr}$ & $P_{act}/2L$ \\ 
     \hline
     Bartlett \er{eq_bt} & 1 & $1/L$ & $1/L$ & $P_{act}/2L$ \\
     \hline
     MLE \cite[(30)]{wu_pca_miso} & 5 & 6 & 2  & 1\\
     \hline
   \end{tabular}
   \label{tab_complexity}
   \vspace{-6pt}
\end{center}
 \end{table}

 Note in Table \ref{tab_complexity}, $+$, $\times$, and $\div$ represent real addition, multiplication, and division, receptively. Also, $N_{itr}$ is the number of  iterations of Algorithm \ref{algo_line_srch} needed to find  $\hat{\theta}_{ML}$ in \er{eq_mle}.  Circuitry operations include switching load impedance $Z_L$ during training sequence  to estimate  $Z_A$ and tuning $Z_L$ to match to estimates of $Z_A$ at packet rate. These circuitry operations consume much more power than the basic operations. Thus, a more energy-efficient solution is one less likely to \textit{act} on these circuitry operations.  Generally speaking, $P_{act}$ is the probability of deciding the alternative hypothesis \er{eq_hyp_test}, 
 \beq\label{eq_P_act}
 P_{act} \ = \ P(t>\gamma|H_0)P(H_0) + P(t>\gamma|H_1)P(H_1) \ ,
 \eeq 
 where $t$ could be any detector and $\gamma$ is its associated threshold  \er{eq_test}. 
 We will compare detectors \er{eq_glrt} and \er{eq_bt}'s  false alarm rate $P(t>\gamma|H_0)$ given a fixed detection rate $P(t>\gamma|H_1)$. 
% Note $P_{act}$ could be orders of magnitude smaller than 1. 
 
 In the next section, we evaluate the performance of our derived detectors via numerical simulations. 
\section{Numerical Results}\label{sec_sim}
In this section, we explore the performance of  the detectors derived in the previous section through numerical examples. Consider a narrow-band MISO communications system with $N=4$ transmit antennas and a single receive antenna, whose carrier frequency is 900 MHz. The duration of each data packet equals to a sub-frame of 5G New Radio (NR), i.e., $T_s=1$ ms. Block-fading channel is assumed, such that during one data packet, the channel remains the same, but it generally varies from packet to packet \cite{bigu}. 
We take the training sequence $\bX$ in \er{eq_X} from a normalized discrete Fourier transform (DFT) matrix of dimension $T=64$, e.g., \cite[eq. 10]{bigu}. Specifically, $\bX$ is chosen as the first $N$ rows of this DFT matrix. Also, the SNR is 10dB. For parameters used in Algorithm \ref{algo_line_srch}, we choose $\alpha=10$ and $\epsilon=0.001$. 

Before a change occurs, the antenna impedance is that of a dipole, i.e. $Z_{A,1} = 72 + 42j \,\Omega$.  If the near-field loading conditions change, the antenna impedance changes accordingly. We assume the changed antenna impedance is $Z_{A, 2} = 72 + 100j \,\Omega$. 
The load impedance is fixed at $Z_L=50 \,\Omega$, as we assume no knowledge of $Z_{A,i}$ with $i=1,2$. Two groups of packets are collected with each group consisting of $L=20$ packets. Antenna impedance in the first group is always $Z_{A,1}$. In the second group, it remains $Z_{A,1}$ if the null hypothesis is true; otherwise, it changes to $Z_{A, 2}$ under  alternative hypothesis \er{eq_hyp_test}. 

The average post-detection SNR of a received symbol is defined as \cite[Sec. VIII]{bigu}, using \er{eq_u_vec} and \er{eq_X}, 
\beq\label{eq_symbolSNR}
\rho \ \define \ \frac{E\left[\Trace\left[\bh^T\bX\bX^H\bh^*\right]\right]}{T\cdot\sigma_{L}^2} \ = \ \frac{P\sigma_{h,i}^2}{\sigma_{L}^2}\ ,
\eeq
where $\sigma_{L}^2$ is the noise variance at the receiver and $\sigma_{h,i}^2$ is the channel variance in \er{eq_sgmh} calculated with $Z_{A,i}$ for $i=1,2$, and the expectation is over randomness of channel. 

\begin{figure}[t!]
    \vspace{-12pt}
    \begin{center}
        \includegraphics[width=.45\textwidth, keepaspectratio=true]{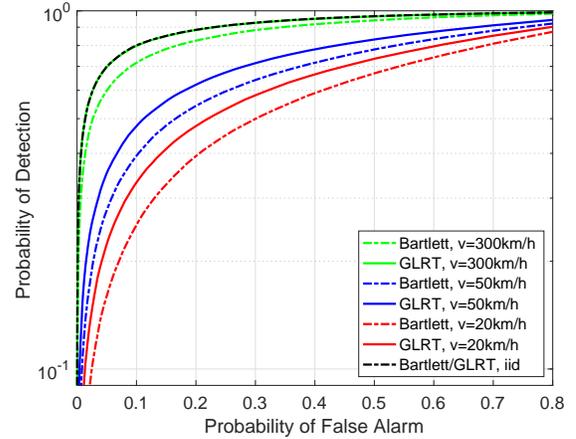}
    \end{center}
    \vspace{-6pt}
    \caption{ROC in Correlated Rayleigh Fading, $L=20$}
    \vspace{-12pt}
    \label{fig_ROC_corr_CH}
\end{figure}

Clarke's model is assumed for temporal correlation \cite{zhen}, where the correlation is
  $J_0(2\pi f_dT_s |l|)$, 
$J_0(\cdot)$ is the zeroth-order Bessel function of the first kind, $T_s=1\,$ms  is the sampling interval, and $l$ is the sample difference. The maximum Doppler frequency is 
$f_d\define v/\lambda$, 
where $v$ is the velocity  of the fasting moving scatterer and $\lambda$ the wave-length of the carrier frequency. Three velocities are chosen in our simulations, i.e., $v=300, 50, 20$ km/h, which represent a user on a high-speed train, driving on a local road, and walking in urban area, respectively.

We can see in Fig. \ref{fig_ROC_corr_CH},  channel correlation significantly impacts the trade-off between probability of detection (PD) and probability of false alarm (PFA). Specifically, for the case with very high speed ($v=300$ km/h), the receiver operating characteristic (ROC) of the GLRT detector is indistinguishable from its counterpart under i.i.d. fading. But the sub-optimal Bartlett's test is a sizable gap away. The average number of iterations $N_{itr}$ in Table \ref{tab_complexity} is around 13.6. Furthermore, for all velocities considered and with a fixed PD at $80\%$, the GLRT detector leads to a PFA at least $5\%$ smaller than using the reference Bartlett's test. This means our proposed GLRT detector could be more energy-efficient due to \er{eq_P_act} and complexity analysis in Table \ref{tab_complexity}. 
%If we fix PFA at $10\%$, we observe a boost of around $10\%$ in PD if the GLRT detector is used instead of the Bartlett test. 

We also consider the impact of transmit diversity on the ROC. Since massive MIMO is widely discussed and used in 5G and beyond, we assume the number of transmit antennas takes value $N=4, 16, 64$. To ensure fair comparison, same total transmit power is assumed. This means the numerator at the right-hand-side of \er{eq_symbolSNR} remains the same for all values of $N$. 
In Fig.~\ref{fig_ROC_TxDiv_corr_CH}, we observe a noticeable benefit for our proposed GLRT detector over the reference Bartlett's test for all $N$ under slow fading with  $v=20$ km/h. In particular, this benefit is around 10\% reduction in PFA for a fixed PD at $80\%$ with $N=4,16$. This means our derived GLRT detector is more power efficient than its Bartlett counterpart, despite its slightly more computation ($N_{itr}=13.6$ in Table \ref{tab_complexity}). If $N = 64$, ROC's of  GLRT and Bartlett detectors approach the perfect classification at the upper left vertex, with the GLRT slightly better than the Bartlett. Thus, higher transmit diversity order generally leads to improved trade-off between PD and PFA.

\begin{figure}[t!]
    \vspace{-12pt}
    \begin{center}
        \includegraphics[width=.45\textwidth, keepaspectratio=true]{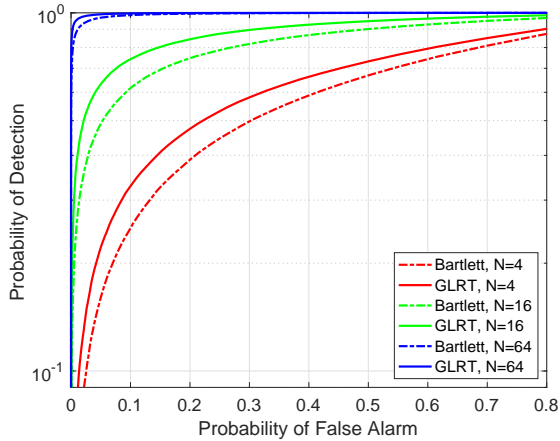}
    \end{center}
    \vspace{-6pt}
    \caption{Impact of Transmit Diversity of ROC, $L=20$}
    \vspace{-12pt}
    \label{fig_ROC_TxDiv_corr_CH}
\end{figure}

\section{Conclusion}\label{sec_conclusion}
In this letter, we formulated the problem of  antenna impedance variations detection at MISO receivers as a hypothesis test on equality of the variance of two complex Gaussian random variables. We derived the GLRT detector for this test, which is not in a closed-form in general. We then proposed a binary-search based algorithm to find the GLRT detector efficiently. When the fading channel is i.i.d., the well-known Bartlett's test is proven as the GLRT detector; but when fading channel is correlated in time, this Bartlett's test is sub-optimal and serves as a reference to our derived GLRT detector. 
Our GLRT detector enjoys a better detection and false alarm trade-off when compared to the reference Bartlett's test, and consequently could save more power for smart wearable devices. 
Numerical simulations also suggest higher transmit diversity significantly improves the detection rate at a given probability of false alarm, especially in slow fading channels. 

One potential future direction is to consider transmit correlation and/or a Rician channel. Another future direction is to identify more computationally efficient detectors with similar ROC behavior as our derived GLRT detector. 

\appendix

The probability density function (PDF) of $\by$ is
\bea\label{eq_pdf_y}
p(\by) 
&= & \det\left(\pi\bSig_\by\right)^{-1}\exp\left(-\by^H\bSig_\by^{-1}\by\right)\\
%&=& \pi^{-NL}\det\left(\sigma_h^2\bC+\sigma_n^2\bI_{L}\right)^{-N}\exp\left(-\by^H\bSig_\by^{-1}\by\right)\nn\\
&=& \pi^{-NL}\det\left(\sigma_h^2\bC+\sigma_n^2\bI_{L}\right)^{-N}\nn\\
&&\exp\left(-\Trace\left[\bY^*\left(\sigma_h^2\bC+\sigma_n^2\bI_{L}\right)^{-1}\bY^T\right]\right) \ ,\nn
\eea
where 
\beq
\bSig_\by \ = \ \left(\sigma_h^2\bC+\sigma_n^2\bI_{L}\right)\otimes\bI_N \ , \nn
\eeq
and the following identity is used \cite[eq.~2.116]{hjor},
\beq
\Trace\left[\bA\bB\bC\bD\right] \ = \left[\vc\left(\bB\right)\right]^T\left[\bC\otimes\bA^T\right]\vc\left(\bD^T\right) \ . \nn
\eeq

Without loss of generality, let $\bs$ be either $\bs_i$ in \er{eq_ss}, $i=1,2$. The PDF in \er{eq_pdf_y} can be simplified into, 
\bea\label{eq_pdf_s}
p(\bs)  & = & \pi^{-NL}\det\left(\sigma_h^2\bLam+\sigma_n^2\bI_{L}\right)^{-N}\nn\\
&&\exp\left(-\Trace\left[\left(\sigma_h^2\bLam+\sigma_n^2\bI_{L}\right)^{-1}\left(\bQ^H\bY^T\bY^*\bQ\right)\right]\right) \nn\\
&=& \pi^{-NL} \left(\prod_{k=1}^{L} \beta_k^{-N}\right)  \cdot \exp\left(-N\sum_{k=1}^{L}\frac{s_k}{\beta_k}\right) \ ,
\eea
where $s_k$ is the $k$-th element of $\bs$ and we define $\beta_k=\lambda_k\sigma_h^2+\sigma_n^2$, for $k=1,\dots,L$. 

%	References
\bibliographystyle{unsrt}

\begin{thebibliography}{40}	
  \bibitem{domi2} C.~P.~Domizioli and B.~L.~Hughes, ``Front-end design for compact MIMO receivers: A communication theory
  perspective,'' {\em IEEE Trans. Commun.}, vol.~60, no.~10, pp.~2938--2949, Oct.~2012.
  
  \bibitem{hass}
  Y.~Hassan and A.~Wittneben, ``Joint spatial channel and coupling impedance matrices estimation in compact MIMO systems: The use of adaptive loads,'' {\em IEEE 2015 26th Annual International Symposium on Personal, Indoor, and Mobile Radio Communications} (PIMRC), 2015, pp. 29--33.
  
   \bibitem{wu}
  S.~Wu and B.~L.~Hughes, ``Training-based joint channel and impedance estimation,'' {\em IEEE 2018 52nd Annual Conference on Information Sciences and Systems (CISS)},  Princeton University, NJ, Mar. 2018. %doi:10.1109/CISS.2018.8362280
  
  \bibitem{wu2} 
  S.~Wu and B.~L.~Hughes,  ``A Hybrid Approach to Joint Estimation of Channel and Antenna Impedance,'' {\em IEEE  2018 52nd Asilomar Conference on Signals, Systems, and Computers}, Pacific Grove, CA, Oct.~2018. 

  \bibitem{wu_pca_miso} 
S.~Wu and B.~L.~Hughes,  ``PCA-based antenna impedance estimation in Rayleigh fading channels,'' {arXiv:2006.11443v1},~2020. 

  \bibitem{wu_mimo_hybrid} 
S.~Wu,  ``A Hybrid Approach to Joint Estimation of MIMO Channel and Antenna Impedance Matrices,'' {\em IEEE  2021 55nd Annual Conference on Information Sciences and Systems (CISS)}, Baltimore, MD, Oct.~2021. 

  \bibitem{wu_pca_mimo} 
S.~Wu and B.~L.~Hughes,  ``Antenna Impedance Estimation at MIMO Receivers,'' {arXiv:2204.01198v1},~2022. 


  \bibitem{vu} 
  M.~Vu, ``MISO Capacity with Per-Antenna Power Constraint,'' {\em IEEE Trans. Commun.}, vol.~59, no.~5, pp.~1268--1274, May~2011.
  
  \bibitem{bigu} M.~Biguesh and A.~B.~Gershman, ``Training-based MIMO channel estimation: a study of estimator tradeoffs and optimal training signals,'' {\em IEEE Trans. Signal Process.}, vol.~54, no.~3, pp.~884--893, Mar.~2006.
  
  \bibitem{kay_est} S.~M.~Kay, {\em Fundamentals of Statistical Signal Processing: Estimation Theory}. Upper Saddle River, New Jersey: Prentice Hall, 1993.
  \bibitem{kay} S.~M.~Kay, {\em Fundamentals of Statistical Signal Processing: Detection Theory}. Upper Saddle River, New Jersey: Prentice Hall, 1998.
  \bibitem{vt} 
  H.~van Trees, K.~Bell and Z.~Zhi, {\em Detection, Estimation and Modulation Theory} Part I - Detection, Estimation, and Filtering Theory.  New York: Wiley, 2013.
  
  \bibitem{ciuo}
  D.~Ciuonzo, V.~Carotenuto and A.~De~Maio, ``On Multiple Covariance Equality Testing with Application to SAR Change Detection,'' {\em IEEE Trans. Signal Process.}, vol.~65, no.~19, pp.~5078--5091,~2017. 
  
  \bibitem{zhen} Y.~R.~Zheng and C.~Xiao, ``Simulation models with correct statistical properties for Rayleigh fading channels,'' {\em IEEE Trans. Commun.}, vol.~51, no.~6, pp.~920--928, Jun.~2003.
  
  \bibitem{hjor}
  A.~Hj{\o}rungnes, {\em Complex-Valued Matrix Derivatives: With Applications in Signal Processing and Communications}. Cambridge: Cambridge University Press, 2011.

\end{thebibliography}

\end{document}